    \def\qed{\hfill$\sqcap\kern-8.0pt\hbox{$\sqcup$}$\\}
        \def\d{{\textnormal d}}
\newtheorem{theorem}{Theorem}
\newtheorem{lemma}{Lemma}
\newtheorem{problem}{Problem}
\newtheorem{corollary}{Corollary}
\theoremstyle{definition}
\newtheorem{definition}{Definition}
\newtheorem{example}{Example}
\begin{document}

\vspace*{0mm}

\noindent
{\Large \bf Tail dependence of the Gaussian copula revisited}

\vspace*{3mm}

\noindent
{\large  Edward Furman$^{a,*}$, Alexey Kuznetsov$^{a}$, Jianxi Su$^{a}$, Ri\v{c}ardas Zitikis$^b$}

\bigskip

\noindent
$^a$ Department of Mathematics and Statistics, York University, Toronto, Ontario M3J 1P3, Canada

\noindent
$^b$ Department of Statistical and Actuarial Sciences,
University of Western Ontario, London, Ontario N6A 5B7, Canada

\vspace*{-2mm}
\noindent
\rule{165mm}{0.2mm}
\\
\noindent
\textbf{Abstract.}
Tail dependence refers to clustering of extreme events. In the context of financial
risk management, the clustering of high-severity risks has a devastating effect on the well-being of firms and is thus of pivotal importance in risk analysis.

When it comes to quantifying the extent of tail dependence, it is generally
agreed that measures of tail dependence must be independent of the marginal distributions
of the risks but rather solely copula-dependent. Indeed, all classical measures of tail dependence
are such, but they investigate the amount of tail dependence
along the main
diagonal of copulas, which has often little in common with the concentration of extremes in the
copulas' domain of definition.

In this paper we urge that the classical measures of tail dependence may
underestimate the level of tail dependence in copulas. {For the Gaussian copula, however, we prove that
the classical measures are maximal. The implication of the result is two-fold: On the one hand, it means
that in the Gaussian case, the (weak) measures of tail dependence that have been reported and used are of
utmost prudence, which must be a reassuring news for practitioners. On the other hand, it further encourages
substitution of the Gaussian copula with other copulas that are more tail dependent.}

\bigskip

{\vskip 0.15cm}
 \noindent {\it JEL Classification}: C02, C51.

 {\noindent {\it Keywords}:  Diagonal, Gaussian copula, maximal tail dependence, tail independence.}

\noindent
\rule{25mm}{0.2mm}
\\
{\footnotesize
$^{*}$Corresponding author. Tel. +1(416)736-2100 ext. 33768. \\
E-mail addresses: efurman@mathstat.yorku.ca (E. Furman), kuznetsov@mathstat.yorku.ca (A. Kuznetsov), gavinsox@mathstat.yorku.ca (J. Su), zitikis@stats.uwo.ca (R. Zitikis).
}

\section{Introduction}
\label{sec-1}
``The devil is in the tails"  is the title of the paper by Donnelly and
Embrechts (2010) who refute the harsh criticism of mathematics (Salmon,
2012) in general, and of the Gaussian copula-based credit risk model
of Li (2000) in particular. As a member of the unholy trinity (Kousky and
Cooke, 2009), the notion of tail dependence is in the very center of this
controversy. Speaking plainly, tail dependence is about the clustering of
extreme events, and it is a rather
daunting phenomenon in the context of risk management. For example, it implies
that devastating losses
within portfolios of risks as well as defaults of financial enterprises in
credit risk portfolios occur together (R\"{u}schendorf, 2013; Wang et al.,
2013; Puccetti and R\"{u}schendorf, 2014).

Mathematically, there exist a variety of ways to quantify the extent of tail dependence
in bivariate random vectors with dependence structures given by copula functions
$C:[0,\ 1]^2\rightarrow [0,\ 1]$ (see, e.g., Nelsen, 2006; Durante and Sempi, 2015, for reviews of the
theory of copulas).
Arguably the most popular
measure of lower tail dependence is nowadays attributed to Joe (1993) (also
Sibuya, 1959) and given by
\begin{eqnarray}
\label{lambda}
\lambda_L:=\lambda_L(C)= \lim_{u \downarrow 0} {C(u,u)\over u}.
\end{eqnarray}
Non-zero (more precisely $(0,\ 1]$) values of index (\ref{lambda})
suggest lower tail dependence in $C$. Just like other synthetic measures, $\lambda_L$ is not always
reliable because it sometimes underestimates the extent of lower tail dependence in copulas
as the next example demonstrates.
\begin{example}
\rm
\label{ExC1}
Consider the following copula (Nelsen, 2006, Example 3.3)
\[
C_{\theta}(u,\ v)=\left\{
\begin{array}{ll}
u, & 0\leq u\leq \theta v\leq \theta, \\
\theta v, & 0\leq \theta v< u < 1- (1-\theta)v, \\
u+v - 1, & \theta \leq 1- (1-\theta) v\leq u\leq 1,
\end{array}
\right.
\]
parametrized by $\theta \in [0,1]$. This copula has two singularities, and it is fully {co-monotonic} (fully {counter-monotonic})
for $\theta=1$ ($\theta=0$, respectively). It is easy to see that
\[
\lambda_L(C_\theta)=\lim_{u\downarrow 0}\frac{\theta u}{u}=\theta.
\]
Let $\lambda_L^\ast(C_{\theta})$ be the measure
as in (\ref{lambda}) but now along the path $\left(u\sqrt{\theta},\ u/\sqrt{\theta}\right)_{0\leq u\leq 1}$ rather than along the diagonal $(u,u)$. Clearly in this case
\[
\lambda_L^\ast(C_\theta)=\lim_{u \downarrow 0}\frac{C(u\sqrt{\theta},u/\sqrt{\theta})}{u}=
\sqrt{\theta}>\theta=\lambda_L(C_\theta),
\]
for every $\theta\in(0,\ 1)$.
\end{example}

On a different note,
when limit (\ref{lambda}) is zero, it is often useful to rely on the somewhat more
delicate index of weak tail dependence $\chi_L\in[-1,\ 1]$ (Coles et al., 1999; Fischer and Klein, 2007)
that is given by
\begin{equation}\label{ind-cl}
\chi_{L}:=\chi_L(C)=\lim_{u\downarrow 0}\frac{2\log u }{\log C(u,u)}-1 ,
\end{equation}
and/or to the index of tail dependence $\kappa_L:=\kappa_L(C)\in[1,\ 2]$ (Ledford and Tawn, 1996) that solves the equation
\begin{equation}\label{ind-kl}
C(u,u)= \ell(u) u^{\kappa_L} \quad \textrm{when} \quad u\downarrow 0,
\end{equation}
assuming that we can find a slowly varying at $0$ function $\ell(u)$.
The following example demonstrates that $\kappa_L$ can also be misleading in a similar way to that of Example \ref{ExC1}.
\begin{example}
\label{ExMO1}
\rm
Recall that the Marshall-Olkin copula is given by
\begin{equation}\label{copula-mo}
C_{a,b}(u,v)=\min(u^{1-a}v,uv^{1-b})
\quad \textrm{for} \quad 0\le u,v\le 1,
\end{equation}
where $a,b\in [0,1]$ are parameters (Cherubini et al., 2013). Denote by $\kappa_L^\ast(C_{a,b})$ a measure
{mimicking} (\ref{ind-kl}) that verifies the tail dependence of $C_{a,b}$ along the path
$\left(u^{2a/(a+b)},u^{2b/(a+b}\right)_{0\leq u\leq 1}$. We readily check that
\[
\kappa_L^\ast(C_{a,b})=2-\frac{2ba}{a+b}\leq 2-\min(a,b)=\kappa_L(C_{a,b}),
\]
where the equality holds only if $a=b$.
\end{example}

Speaking generally, indices (\ref{lambda}), (\ref{ind-cl}) and (\ref{ind-kl}) may underestimate the amount of tail dependence even
in copulas that are symmetric and do not have singularities (Furman et al., 2015).
The reason is that all the aforementioned indices of lower tail dependence rely entirely on the
behavior of copulas along their main diagonal
 $(u,\ u)_{0\leq u\leq 1}$. However,
the tail dependence of copulas can be substantially stronger along the paths other than
 the main diagonal. This can be a serious disadvantage, as reported by, e.g., Schmid and Schmidt (2007), Zhang (2008), Li et al. (2014), and Furman et al. (2015).

When it comes to the bivariate Gaussian copula, henceforth denoted by $C_{\rho}$, which has become a synonym of the recent
subprime mortgage crisis, it can be shown that indices (\ref{lambda}), (\ref{ind-cl}) and
(\ref{ind-kl})
are equal to $0,\ \rho$ and $2/(1+\rho)$, respectively, where $\rho\in(0,\ 1)$ is the
correlation coefficient.
In the light of discussion hitherto, the following most natural problem
arises:
\begin{problem}
\label{pr}
Let $\psi,\varphi:[0,\ 1]\rightarrow [0,\ 1]$ be functions yielding an admissible path
$(\psi(u),\ \varphi(u))_{0\leq u\leq 1}$ in $[0,\ 1]^2$, and let $\lambda_L^\ast$,
$\chi_L^\ast$ and $\kappa_L^\ast$ be the counterparts of (\ref{lambda}), (\ref{ind-cl}) and
(\ref{ind-kl}), respectively, calculated along the noted path. Is there an admissible path such that
any of the following
bounds holds
\begin{equation}
\label{pr-lim}
\lambda_L^\ast(C_\rho)>\lambda_L(C_\rho),\textnormal{ }
\chi_L^\ast(C_\rho)>\chi_L(C_\rho) \textnormal{ and/or }
\kappa_L^\ast(C_\rho)<\kappa_L(C_\rho)?
\end{equation}
\end{problem}
\noindent
A positive answer to this question would reinstate to an extent the Gaussian copula in
public favor, whereas a negative answer would mean that index (\ref{lambda}) is
maximal in the Gaussian case, which of course does not imply the same conclusion for
other copulas.

In this paper we investigate the aforementioned problem. To this end, in Section \ref{sec-2} we set out to formally define the class of `admissible' path functions as well as the collection of
`admissible' paths mentioned in Problem \ref{pr}.
In Section \ref{sec-3}, we then provide a complete solution to Problem \ref{pr}. Our
proofs rely on subtle geometric arguments involving intersections of convex curves with
their rotations. Section \ref{sec-4} concludes the
paper.

\section{Paths and indices of maximal tail dependence}
\label{sec-2}

Our main goal in this section is to describe admissible paths $(\psi(u),\ \varphi(u))_{0\leq u\leq 1}$ formally.
We borrow heavily from Furman et al. (2015).
\begin{definition}
\label{paths}
A function $\varphi: [0,1] \to [0,1]$ is called {\it admissible} if it satisfies the following conditions:
\begin{enumerate}[\rm (1)]
\item [(C1)]
$\varphi(u) \in [u^2,1] $ for every $u\in [0,1] $;
\item [(C2)]
$\varphi(u)$ and $u^2/\varphi(u) $ converge to $0$ when $u\downarrow 0$.
\end{enumerate}
Then the path $(\varphi(u),u^2/\varphi(u))_{0\le u \le 1}$ is admissible whenever the function $\varphi $ is admissible. Also, we denote by $\mathcal{A}$ the set of all admissible
functions $\varphi$.
\end{definition}

A number of observations are instrumental to clarify the definition. First,
condition (C1) makes sure that $\varphi(u)\in[0,\ 1]$ and
$u^2/\varphi(u) \in [0,\ 1]$, whereas  condition (C2) is motivated by the fact that we are
interested in the behavior of the copula $C$
near the lower-left vertex of its domain of definition. Second, the function $\varphi_0(u)=u,\ u\in[0,\ 1]$, is admissible and yields the main
diagonal $(u,\ u)_{0\leq u\leq 1}$. Third, for the independence
copula {$C^{\perp}$}, it holds that $C^{\perp}(\varphi(u),u^2/\varphi(u))=u^2,\  u\in[0,\ 1]^2$,
which is path-independent as expected, thus warranting the choice
$\psi(u)=u^2/\varphi(u),\ u\in [0,\ 1]$.

In order to determine the strongest extreme co-movements of risks
for any copula $C$, we search for functions $\varphi\in\mathcal{A}$ that maximize the probability
\[
\Pi_{\varphi }(u) =C\big (\varphi(u),u^2/\varphi(u)\big ), \;\;\; u \in (0,1),
\]
or, equivalently, the function
\[
d_\varphi(C,C^\perp)(u)=C\big (\varphi(u),u^2/\varphi(u)\big )-
C^\perp (\varphi(u),u^2/\varphi(u)\big ), \;\;\; u \in (0,1),
\]
which is non-negative for positively quadrant dependent (PQD) (Lehmann, 1966)
 copulas $C$.
Then an admissible function $\varphi^* \in \cal{A} $ is called {\it a function of maximal dependence} if
\begin{equation}
\label{Pi-m}
\Pi_{\varphi^* }(u)=\max_{\varphi \in \cal{A} }\Pi_{\varphi }(u)
\end{equation}
for all $u\in (0,1)$. The corresponding admissible path
$(\varphi^*(u),u^2/\varphi^*(u))_{0\leq u\leq 1}$ is called
{\it a path of maximal dependence}.
Generally speaking, the path $\varphi^\ast$ is not unique, but for each such path the value of $\Pi_{\varphi^\ast}$ is the same. In what follows, we use the notation $\Pi^*(u)$ instead of $\Pi_{\varphi^* }(u)$.

Given the new paradigm of prudence that has taken the world of quantitative risk management by storm (OSFI, 2015), it is sensible to introduce conservative variants of indices
(\ref{lambda}), (\ref{ind-cl}) and (\ref{ind-kl}) that would rely on path of maximal dependence (\ref{Pi-m}), rather than on the main
diagonal path of the copula $C$. Namely, we suggest
\begin{equation}
\lambda_L^*:=\lambda_L^*(C)= \lim_{u \downarrow 0} {\Pi^*(u)\over u}
\textnormal{ instead of }
\lambda_{L}(C)=\lim_{u \downarrow 0} \frac{C(u,u)}{u},
\end{equation}
and
\begin{equation*}
\label{chai}
\chi_L^*:=\chi_L^*(C)=\lim_{u\downarrow 0}\frac{2\log u }{\log \Pi^*(u)}-1
\textnormal{ instead of }
\chi_L=\lim_{u\downarrow 0}\frac{2\log u }{\log C(u,\ u)}-1,
\end{equation*}
subject to the existence of the limits, and also
\begin{equation*}
\label{mo-1b}
\Pi^*(u)=\ell^*(u) u^{\kappa_L^*},\ u\downarrow 0
\textnormal{ instead of }
\Pi(u)=\ell(u) u^{\kappa_L},\ \quad u\downarrow 0,
\end{equation*}
assuming that there exist slowly varying at zero functions $\ell^\ast(u)$ and $\ell(u)$
(Ledford and Tawn, 1996).
These new indices of tail dependence provide a more prudent
estimation of the extent of tail dependence in copulas and, in conjunction with tail-based
risk measures, are capable of distinguishing
between risky positions in situations where the classical indices of tail
dependence fail to do so
(Furman et al., 2015, Section 3).

A useful technique for deriving function(s) of maximal dependence, and thus in turn of the corresponding
indices, consists of three
steps:
\begin{enumerate}
  \item [(S1)] searching for critical points of the function $x \mapsto C(x,u^2/x)$ over the interval $[u^2,1]$ and for each $u\in [0,\ 1]$;
  \item [(S2)] checking which solution(s) is/are global maximum/maxima;
  \item [(S3)] verifying that the function $u\mapsto \varphi^*(u)$  is in $\mathcal{A}$.
\end{enumerate}
Accomplishing these tasks sometimes results in explicit formulas for maximal dependence functions, while in
some other cases obtaining closed-form solutions may not be possible. {For example, as we see from Furman et al. (2015), for the Farlie-Gumbel-Morgenstern (FGM) copula this task is doable, whereas for the generalized Clayton copula there is no closed-form solution.
We refer the reader to, respectively, Sections 4 and 6 in Furman et al. (2015) for more technical discussions.}

Sometimes, especially when formulas for conditional copulas
are readily available, it is useful to recall that partial derivatives of copulas are
conditional copulas, and thus the task of determining the set of critical points becomes
equivalent to
finding all the solutions in $x\in[u^2,\ 1]$ to the equation
\begin{equation}
\label{cond_c}
xC_{2|1}\left(\frac{u^2}{x}|x\right)= \frac{u^2}{x}C_{1|2}\left(x|\frac{u^2}{x}\right).
\end{equation}

{Interestingly, as has been also pointed out by one of the referees, there
are symmetric copulas whose paths of maximal dependence are diagonal:
e.g., the FGM and Clayton copulas, as well as the symmetric subclass of
the Marshall-Olkin copulas (see, Furman et al., 2015). Hence, a
natural question is whether or not the path of maximal dependence always
coincides with the diagonal when the copula function is symmetric?
Unfortunately, the answer to this question is not always positive, as we
 illustrate next.

First, we recall from Furman et al. (2015) that there are symmetric copulas whose paths of maximal dependence are not diagonal, such as the $0.5/0.5$ mixture of two `mirrored' Marshall-Olkin copulas.  Next we present an example showing that this argument also holds for the absolutely continuous subclass of symmetric copulas.

For this, we recall the bivariate extreme value copula (Pickands, 1981; Guillotte and Perron, 2016)
\[
C_A(u,v)=\exp\left\{\ln(uv){A\left(\frac{\ln v}{\ln uv} \right)}\right\},
\]
where $A:[0,1]\rightarrow [0.5,1]$ is the Pickands dependence function, which is convex and satisfies the bounds  $(1-t)\vee t \leq A(t) \leq 1$ for $t\in [0,1]$. Define the $0.5/0.5$ mixture of two `mirrored' extreme value copulas by
\begin{eqnarray}
\label{mix-extreme-c}
C_{A_1,A_2}(u,v)=\frac{1}{2}\big(C_{A_1}(u,v)+C_{A_2}(u,v) \big),
\end{eqnarray}
where $A_1$ and $A_2$ are two Pickands dependence functions such that $A_1(t)=A_2(1-t)$ for $t\in [0,1]$.  It is not difficult to see that copula (\ref{mix-extreme-c}) is PQD, symmetric around the diagonal, and absolutely continuous when the Pickands dependence functions $A_1$ and $A_2$ are differentiable.

To prove that the path of maximal dependence for the just defined copula may not be diagonal, it is sufficient to show that there exists $A_1$ such that
\begin{eqnarray}
\label{ineqn-extreme-c}
\frac{\partial^2}{\partial x^2}C_{A_1,A_2}(x,u^2/x)\Big|_{x=u} > 0.
\end{eqnarray}
Equivalently, we need to verify that for $\psi(t,u)=u^{A_1(1-t)}+u^{A_1(t)}$ and $u\in [0,\ 1]$ we have
\begin{eqnarray}
\frac{\partial^2}{\partial t^2} \psi(t,u)\big|_{t=0.5}
=2 u^{A_1(1/2)}\ln (u)\left(
(A_1^{(1)}(1/2))^2\ln u+A_1^{(2)}(1/2)
\right)
>0,
\label{deriv-3}
\end{eqnarray}
where $A_1^{(k)}$ is the $k$-th derivative of $A_1$. Hence, unless the function $A_1$ attains its minimum at $t=0.5$, there exists $u^\ast\in[0,\ 1]$ such that statement (\ref{deriv-3}) holds for all $u\in[0,\ u^\ast]$.
This suggests that $x\mapsto C(x,\ u^2/x)$ is convex at $x=u$ for $u\in[0,\ u^\ast]$, and so the path of maximal dependence
cannot coincide with the diagonal on the aforementioned
interval, that is, $\varphi^\ast(u)\neq u$ for $u\in[0,\ u^\ast]$.

We conclude this section by noting that other scholars have also
considered  other than the diagonal paths when measuring tail dependence. For
example, Asimit et al. (2016) use the conditional Kendall's tau to measure
tail dependence. Joe et al. (2010) introduce the tail dependence function $b(w_1,w_2;C)=\lim_{u\downarrow 0}C(uw_1,uw_2)/u$ for $w_1,w_2>0$ to measure tail dependence via different directions.  Hua and Joe (2014) use the excess-of-loss economic
pricing functional to study tail dependence. All of these measures as well
as the notion of maximal tail dependent discussed in the present paper
provide complementary ways for understanding tail dependence.
}

\section{Main results}
\label{sec-3}
The bivariate Gaussian copula arises from the bivariate normal distribution. As such,
it is arguably the most popular and well-studied copula, extensively used in financial and
insurance mathematics (MacKenzie and Spears, 2014).
We recall that the Gaussian copula $C_{\rho}(u,v)$ is defined, for $0\leq u,v\leq 1$,
 as follows
\begin{equation}\label{def_CG}
C_{\rho}(u,v)=\Phi_2(\Phi^{-1}(u),\Phi^{-1}(v);\rho),
\end{equation}
where $\Phi(u)$ and $\Phi^{-1}(u)$ are the standard-normal distribution function and its inverse, and
\begin{equation}\label{def_Phi2}
\Phi_2(s,t;\rho)=\int_{-\infty}^{s} \int_{-\infty}^{t}  \frac{1}{{2\pi}\sqrt{1-\rho^2}}\exp \left \{-\frac{x^2-2\rho x y +y^2}{2(1-\rho^2)}\right \}dy dx
\end{equation}
is the distribution function of the bivariate normal distribution with correlation parameter  $\rho\in (-1,1)$, defined for all $s,t\in\mathbb{R}$.

\begin{theorem}\label{th}
For the Gaussian copula $C_{\rho}$,
\begin{enumerate}[\rm (I)]
\item \label{th-a}
when $\rho \in (-1,0)$, there is no admissible path of maximal dependence;
\item \label{th-b}
when $\rho=0$, every admissible path is a path of maximal dependence;
\item \label{th-c}
when $\rho \in (0,1)$, the only path of maximal dependence is the diagonal $(u,u)_{0\leq u \leq 1}$.
\end{enumerate}
\end{theorem}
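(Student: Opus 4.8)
The plan is to analyze the function $x \mapsto C_\rho(x, u^2/x)$ on $[u^2, 1]$ using the steps (S1)--(S3) outlined above, exploiting the explicit Gaussian structure. First I would pass to the natural coordinates: writing $s = \Phi^{-1}(x)$ and $t = \Phi^{-1}(u^2/x)$, the constraint $x \cdot (u^2/x) = u^2$ becomes $\Phi(s)\Phi(t) = u^2$, i.e.\ a fixed level curve in the $(s,t)$-plane, and the diagonal point $x = u$ corresponds to $s = t = \Phi^{-1}(u)$. So the problem is: among all points $(s,t)$ on the curve $\Phi(s)\Phi(t) = u^2$ (equivalently $\log\Phi(s) + \log\Phi(t) = \text{const}$), which one maximizes $\Phi_2(s,t;\rho)$? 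By the symmetry of both $\Phi_2(\cdot,\cdot;\rho)$ and the constraint curve under swapping $s \leftrightarrow t$, the diagonal $s = t$ is always a critical point; the content is whether it is a maximum, a minimum, or a saddle, and whether there are other critical points.

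Next I would set up the Lagrange/critical-point condition. Using that $\partial_s \Phi_2(s,t;\rho) = \phi(s)\,\Phi\!\big((t-\rho s)/\sqrt{1-\rho^2}\big)$ (and symmetrically in $t$), the critical-point equation along the constraint $\log\Phi(s) + \log\Phi(t) = \text{const}$ becomes
\begin{equation*}
\frac{\phi(s)\,\Phi\!\big((t-\rho s)/\sqrt{1-\rho^2}\big)}{\phi(s)/\Phi(s)} = \frac{\phi(t)\,\Phi\!\big((s-\rho t)/\sqrt{1-\rho^2}\big)}{\phi(t)/\Phi(t)},
\end{equation*}
i.e.\ $\Phi(s)\,\Phi\!\big((t-\rho s)/\sqrt{1-\rho^2}\big) = \Phi(t)\,\Phi\!\big((s-\rho t)/\sqrt{1-\rho^2}\big)$. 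This is exactly the conditional-copula balance condition \eqref{cond_c} written out. The task reduces to showing: for $\rho > 0$, the only solution with $(s,t)$ on the curve is $s = t$, and this is the global max; for $\rho < 0$, the diagonal is a local (and in fact the analysis will show there is no interior global) maximum fails in a way that pushes the optimum to the boundary of admissibility; and for $\rho = 0$ the product structure makes $\Pi_\varphi(u) = u^2$ identically.

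For part (\ref{th-b}), $\rho = 0$ is immediate since $C_0 = C^\perp$, so $\Pi_\varphi(u) \equiv u^2$ for every admissible $\varphi$. For part (\ref{th-c}), the heart of the matter, the approach I would take is the geometric one the authors advertise: rewrite the balance condition as an intersection problem between a convex level curve $\{\Phi(s)\Phi(t) = u^2\}$ (convexity of $\{-\log\Phi(s) - \log\Phi(t) \le c\}$ can be checked from log-concavity of $\Phi$) and its reflection/rotation, and argue that a convex curve and its mirror image across the diagonal meet only on the diagonal. More concretely, I expect one wants to show the function $g(s,t) = \Phi(s)\Phi\!\big((t-\rho s)/\sqrt{1-\rho^2}\big) - \Phi(t)\,\Phi\!\big((s-\rho t)/\sqrt{1-\rho^2}\big)$ is antisymmetric and, restricted to the constraint curve parametrized monotonically so that $s$ increases as $t$ decreases, is strictly monotone in the parameter — hence vanishes only at $s=t$. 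Then a second-derivative computation at $s=t$ (the analogue of \eqref{deriv-3}, which for the Gaussian should come out with the correct sign precisely when $\rho > 0$) together with the behavior as $(s,t)$ runs to the two ends of the curve confirms the diagonal is the global maximum and step (S3) is trivial since $\varphi^*(u) = u$. For part (\ref{th-a}), $\rho < 0$, I would show the same monotonicity/sign analysis forces the supremum of $\Pi_\varphi(u)$ to be approached only as $\varphi(u) \to 1$ or $u^2/\varphi(u) \to 1$, i.e.\ along paths leaving the admissible class (condition (C2) is violated in the limit), so no admissible path attains the maximum.

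The main obstacle I anticipate is part (\ref{th-c}): proving rigorously that the balance curve meets its diagonal-reflection only on the diagonal, uniformly in $u$, and upgrading ``unique interior critical point'' to ``global maximum'' by controlling the boundary behavior of $x \mapsto C_\rho(x, u^2/x)$ as $x \downarrow u^2$ or $x \uparrow 1$. The convexity/rotation geometry is the right tool, but making the intersection argument airtight — especially handling all $u \in (0,1)$ and the correlation $\rho$ as a parameter simultaneously — is where the real work lies; the sign of the second derivative at the diagonal is a short computation by comparison.
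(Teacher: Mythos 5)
Your setup for parts (I) and (II) and your reduction of part (III) to the critical-point equation along the level curve $\Phi(s)\Phi(t)=u^2$ match the paper: the balance condition you derive is exactly the paper's equation \eqref{eqn_hprime_1} rewritten in the variables $s=\Phi^{-1}(x)$, $t=\Phi^{-1}(u^2/x)$, and the observation that for $\rho>0$ the global maximum must be an \emph{interior} critical point (since $C_\rho>C^\perp$ forces the value to exceed $u^2$ inside $(u^2,1)$ while it equals $u^2$ at the endpoints) also disposes of the ``upgrading to a global maximum'' worry you raise at the end --- no second-derivative test or boundary analysis is needed once uniqueness of the critical point is known.

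The genuine gap is that uniqueness itself, which you correctly flag as ``where the real work lies'' but then only gesture at. Two specific problems. First, the geometric principle you invoke --- that a convex level curve and its ``reflection/rotation'' meet only on the diagonal --- is not a correct general statement: the level curve $\{\Phi(s)\Phi(t)=u^2\}$ is symmetric about the diagonal, so its reflection is the curve itself, and a convex curve can meet a rotation of itself in many points (a circle centred at the origin is the extreme case). The paper's actual mechanism is sharper: a further linear change of variables $w=s$, $z=(t-\rho s)/\sqrt{1-\rho^2}$ turns \eqref{eqn_hprime_1} into $\Phi(w)\Phi(z)=\Phi(w')\Phi(z')$ with $(w',z')$ the rotation of $(w,z)$ by $\beta=\arcsin\rho$, so the claim becomes statement \eqref{condition}: each level set ${\cal C}_\alpha$ meets ${}^{\beta}{\cal C}_\alpha$ in exactly one point. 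Second, proving \eqref{condition} is not a soft convexity argument: it requires showing that the polar radius along ${\cal C}_\alpha$ is strictly decreasing and then strictly increasing (Lemmas \ref{lemma_inter-point} and \ref{lemma3}), which for the portion of the curve in the third quadrant reduces to the strict monotonicity of $w\mapsto we^{w^2/2}\Phi(w)$ on all of ${\mathbb R}$ --- a Mills-ratio-type inequality (Pinelis, 2002) that your proposed ``$g$ is strictly monotone along the constraint curve'' route would have to reprove in some form, together with a case analysis over which quadrants ${\cal C}_\alpha$ occupies as $\alpha$ ranges over $(0,1)$. Until that monotonicity is actually established, the proof of part (III) is missing its core step.
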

\begin{corollary}\label{cor}
For the Gaussian copula $C_{\rho}$,  when $\rho \in [0,1)$ we have
\begin{enumerate}[\rm (A)]
\item $\lambda_L^*(C_{\rho})=\lambda_L(C_{\rho})=0$;
\item $\chi_L^*(C_{\rho})=\chi_L(C_{\rho})=\rho$;
\item $\kappa_L^*(C_{\rho})=\kappa_L(C_{\rho})=2/(1+\rho)$.
\end{enumerate}
\end{corollary}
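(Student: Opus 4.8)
The plan is to read the corollary off Theorem~\ref{th}. The first step is the observation that $\Pi^*(u)=C_\rho(u,u)$ for every $u\in(0,1)$ whenever $\rho\in[0,1)$. Indeed, when $\rho\in(0,1)$, part~(\ref{th-c}) says the unique path of maximal dependence is the diagonal, i.e.\ $\varphi^*(u)=u$, so $\Pi^*(u)=C_\rho\big(u,u^2/u\big)=C_\rho(u,u)$; when $\rho=0$, part~(\ref{th-b}) together with the path-independence of the independence copula noted in Section~\ref{sec-2} (recall $C_0=C^\perp$) gives $\Pi^*(u)=u^2=C_0(u,u)$. Substituting $\Pi^*(u)=C_\rho(u,u)$ into the definitions of $\lambda_L^*,\chi_L^*,\kappa_L^*$ turns them verbatim into the definitions~(\ref{lambda}), (\ref{ind-cl}), (\ref{ind-kl}) of $\lambda_L,\chi_L,\kappa_L$; hence each starred index exists exactly when its unstarred counterpart does and then equals it, and it remains only to evaluate $\lambda_L(C_\rho),\chi_L(C_\rho),\kappa_L(C_\rho)$.

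The second step is the classical diagonal tail analysis of the Gaussian copula (the values are recalled in Section~\ref{sec-1}, but I would rederive them for completeness). Set $z=z(u)=\Phi^{-1}(u)\to-\infty$ as $u\downarrow0$. From the Mills-ratio expansion, $\log u=\log\Phi(z)=-\tfrac12 z^2\,(1+o(1))$, and I would establish the matching logarithmic rate
\[
\log C_\rho(u,u)=\log\Phi_2(z,z;\rho)=-\frac{z^2}{1+\rho}\,(1+o(1)),\qquad z\to-\infty .
\]
The upper bound is immediate from $\Phi_2(z,z;\rho)\le\Phi\big(z\sqrt{2/(1+\rho)}\big)$ (the event that both coordinates of a correlation-$\rho$ standard normal pair lie below $z$ is contained in the event that their sum, of variance $2(1+\rho)$, lies below $2z$); the matching lower bound comes from confining the first coordinate to a vanishingly small window around $z$ and estimating the conditional probability for the second, equivalently from the large-deviations ``dominant point'' identity $\min\{\tfrac12 x^{\top}\Sigma^{-1}x:\,x\le(z,z)\ \text{componentwise}\}=z^2/(1+\rho)$ for the correlation matrix $\Sigma$. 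For part~(C) one additionally needs the correction to be a mere power of $z$, so that $C_\rho(u,u)=\ell(u)\,u^{\kappa_L}$ with $\ell$ genuinely slowly varying at $0$; this is part of the refined orthant-probability asymptotics.

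Combining the two rates finishes the proof. Since $\tfrac{1}{1+\rho}>\tfrac12$ for $\rho\in[0,1)$, the bivariate probability decays strictly faster than the marginal one, so $\lambda_L(C_\rho)=\lim_{z\to-\infty}\Phi_2(z,z;\rho)/\Phi(z)=0$. Next, $\tfrac{2\log u}{\log C_\rho(u,u)}\to\tfrac{2(-z^2/2)}{-z^2/(1+\rho)}=1+\rho$, whence $\chi_L(C_\rho)=\rho$, and $\tfrac{\log C_\rho(u,u)}{\log u}\to\tfrac{2}{1+\rho}$, whence $\kappa_L(C_\rho)=2/(1+\rho)$; the boundary case $\rho=0$, where $C_0(u,u)=u^2$, gives $0,0,2$ directly and is consistent. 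By Step~1 the starred indices carry the same values, which is the assertion of the corollary. The only part that requires real work is the two-sided logarithmic asymptotics of $\Phi_2(z,z;\rho)$ in Step~2 (and the slow variation of the correction factor for part~(C)); everything else is bookkeeping once Theorem~\ref{th} is in hand.
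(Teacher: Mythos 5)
Your proposal is correct and follows the same route the paper intends: the corollary is read off Theorem~\ref{th} via $\Pi^*(u)=C_\rho(u,u)$, combined with the classical diagonal values $0$, $\rho$, $2/(1+\rho)$ that the paper simply asserts in Section~\ref{sec-1} without proof. Your Step~2 (the two-sided logarithmic asymptotics $\log\Phi_2(z,z;\rho)\sim -z^2/(1+\rho)$ and the slow variation of the correction factor) supplies exactly the standard orthant-probability estimates the paper leaves to the cited literature, and the derivation is sound.
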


\begin{proof}[Proof of Theorem \ref{th}]
The proof of parts (\ref{th-a}) and (\ref{th-b}) of Theorem \ref{th} is simple. When $\rho \in (-1,0)$ we have $C_{\rho}(u,v)< uv$ for all $u,v\in (0,1)$. Therefore, $\Pi_{\varphi }(u)$ achieves its maximum at either $\varphi(u)=u^2$ or $\varphi(u)=1$. However, the two paths $(u^2,1)_{0\leq u \leq 1}$ and $(1,u^2)_{0 \leq u \leq 1}$ are not admissible, which establishes statement (\ref{th-a}). Statement (\ref{th-b}) follows from the fact that in the case $\rho=0$ the Gaussian copula  $C_{\rho} $ reduces to the independence copula  $C^\perp(u,\ v)=uv,\ 0\leq u,v,\leq 1$.

The proof of part (\ref{th-c}) of Theorem \ref{th} is much more involved, and it requires several auxiliary results.
Our first goal is to rephrase the statement about the location of paths of maximal dependence as a geometric statement about certain curves. To this end, for $\alpha \in (0,1)$, we define
\begin{equation}\label{def_C_alpha}
{\cal{C}}_{\alpha}:=\left \{(w,z)  \;:\;  \Phi(w)\Phi(z)=\alpha \right \} \subset {\mathbb R}^2.
\end{equation}
These are the level sets of the function of two variables  $(w,z) \mapsto \Phi(w) \Phi(z)$, and these sets play a pivotal role in our proof.

Given a point $(x_1,x_2) \in {\mathbb R}^2$, we denote by ${}^{\beta}(x_1,x_2)$ the point $(y_1,y_2)$ obtained from $(x_1,x_2)$ by rotation by angle $\beta$ counter-clockwise, that is
\begin{align*}
y_1&=\cos(\beta)x_1 - \sin(\beta) x_2, \\
y_2&=\sin(\beta)x_1 +\cos(\beta) x_2.
\end{align*}
Similarly, for any set $\gamma \subset {\mathbb R}^2$ we denote by ${}^{\beta} \gamma$ the result of rotating the set $\gamma$ counter-clockwise by angle $\beta$.

Consider the following statement:
\begin{align}\label{condition}
&{\textnormal{ \it For any $\alpha \in (0,1)$  and $\beta \in (0,\pi/2)$, the intersection
 ${\cal{C}}_{\alpha}\cap {^{\beta}}{\cal{C}}_{\alpha}$}}\\ \nonumber
 &\qquad \qquad\qquad \;\;{\textnormal{\it consists of a unique point.} }
\end{align}
We next prove that the statement above is in fact stronger than statement (\ref{th-c}) of Theorem
\ref{th}. Thus proving \eqref{condition} automatically completes the proof of Theorem \ref{th}(\ref{th-c}).
\begin{lemma}\label{lemma1}
Statement \eqref{condition} implies part (\ref{th-c}) of Theorem \ref{th}.
\end{lemma}
\begin{proof}
Finding a path of maximal dependence is equivalent to solving the following optimization problem: For every (fixed) $u \in (0,1)$ we want to find the maximum of the function
$[u^2,1]\ni x   \mapsto C_{\rho}(x,u^2/x)$. First of all, note that the restriction
$\rho \in (0,1)$ implies $C_{\rho}(u,v)> uv$ for all $u,v\in (0,1)$. From this result we see that
$C_{\rho}(x,u^2/x)>u^2$
for all $x\in (u^2,1)$, and since $C_{\rho}(x,u^2/x)=u^2$ for $x=u^2$ or $x=1$,
we conclude that the function $x \mapsto C_{\rho}(x,u^2/x)$ achieves the
global maximum for some $\tilde x$ in the open interval $(u^2,1)$.
Since the function $x \mapsto C_{\rho}(x,u^2/x)$ is smooth
in the interval $(u^2,1)$, the global maximum must be one of its critical points, so that
$\frac{\d}{\d x} C_{\rho}(x,u^2/x)=0$ at the maximum point $x=\tilde x$.
To find the critical points, we solve the following equation:
\begin{equation}\label{eqn_hprime_1}
x\Phi\left(\frac{1}{\sqrt{1-\rho^2}}(\Phi^{-1}(u^2/x)-\rho\Phi^{-1}(x) ) \right)
=\frac{u^2}{x}\Phi\left(\frac{1}{\sqrt{1-\rho^2}}
(\Phi^{-1}(x)-\rho\Phi^{-1}(u^2/x)) \right)
\end{equation}
in $x\in(u^2,\ 1)$ for $u\in (0,\ 1)$, which was derived by using equation (3.1) and (3.2) in Meyer (2013) (also Fung and Seneta, 2011; McNeil et al., 2005) as well as \eqref{cond_c}, \eqref{def_CG} and \eqref{def_Phi2} in the current paper.
Note that $x=u$ is clearly a solution of equation (\ref{eqn_hprime_1}). Thus to establish
our main result, it is enough to prove that for every fixed $u \in (0,1)$ there are no other
solutions to \eqref{eqn_hprime_1} except for $x=u$.

Also, we introduce the following change of variables: let $\Lambda$ be the map that
sends any point $(u,x)$ from the domain
\begin{equation*}
D=\{ 0 < u < 1, \; \; u^2 < x < 1\} \subset {\mathbb R}^2
\end{equation*}
into another point $\Lambda (u,x)=(w,z) \in {\mathbb R}^2$ according to the rule:
\begin{equation}\label{def_w_z}
w = \Phi^{-1}(x) \;\;  \textnormal{  and }  \;\; z = \frac{1}{\sqrt{1-\rho^2}}\Phi^{-1}(u^2/x)-\frac{\rho}{\sqrt{1-\rho^2}}\Phi^{-1}(x).
\end{equation}
It is easy to see that $\Lambda$ is a diffeomorphism of $D$ onto ${\mathbb R}^2$, with the inverse map  	
$(u,x)=\Lambda^{-1} (w,z)$ given by
\begin{equation}\label{inverse_Lambda}
u=\left[\Phi(w) \Phi(\rho w + \sqrt{1-\rho^2} z )\right]^{1/2}, \;\;\; x=\Phi(w).
\end{equation}
Using (\ref{def_w_z}) it is easy to check that the diagonal $(u,u)$ of the set $D$ is mapped onto the straight line
$$
l:=\left\{ (z,w) \in {\mathbb R}^2 \; : \;  z = \frac{1-\rho}{\sqrt{1-\rho^2}}  w \right\}.
$$
>From \eqref{inverse_Lambda} we find
$$
u^2/x=\Phi(\rho w + \sqrt{1-\rho^2} z)
$$
and
$$
\frac{1}{\sqrt{1-\rho^2}}
(\Phi^{-1}(x)-\rho\Phi^{-1}(u^2/x))=\sqrt{1-\rho^2} w - \rho z.
$$
Combining the above formulas we conclude that equation \eqref{eqn_hprime_1} is equivalent to
$$
\Phi(w) \Phi(z)=\Phi(\sqrt{1-\rho^2} w - \rho z) \Phi(\rho w + \sqrt{1-\rho^2} z).
$$
Finally, we denote $\beta=\arcsin(\rho)$, so that $\rho=\sin(\beta)$ and $\sqrt{1-\rho^2}=\cos(\beta)$,
and rewrite the above equation as
\begin{equation}\label{eqn_phiz_phiw}
\Phi(w)\Phi(z)=\Phi(w') \Phi(z'),
\end{equation}
where $w'=\cos(\beta)w-\sin(\beta) z$ and $z'=\sin(\beta) w + \cos(\beta) z$. Note that
$\beta \in (0,\pi/2)$ and
 the point $(w',z')$ is obtained from the point $(w,z)$ by rotation by the angle $\beta$ counter-clockwise. Using our previous notation we can write $(w',z')={}^{\beta} (w,z)$.

To summarize, we have shown that after a change of variables $\Lambda: D \mapsto {\mathbb R}^2$ equation \eqref{eqn_hprime_1} is equivalently transformed into equation \eqref{eqn_phiz_phiw}. The latter equation has a simple geometric interpretation: a point $(w,z)$ satisfies \eqref{eqn_phiz_phiw} if and only if $(w,z) \in {\cal C}_{\alpha} \cap {}^{\beta}{\cal C}_{\alpha}$, where $\alpha=\Phi(z)\Phi(w)$ and
${\cal C}_{\alpha}$ is the level set defined in \eqref{def_C_alpha}. Next, all points on the straight line $l$ defined above are the solutions
of \eqref{eqn_phiz_phiw}; this is easy to check directly, and it also follows at once from the fact that the points on $l$ are the images $(w,z)=\Lambda(u,u)$ of points on the diagonal
of $D$, which do satisfy the equivalent equation \eqref{eqn_hprime_1}. Therefore, we have shown that for every
$\alpha  \in (0,1)$ there is {\it at least one} point in the intersection ${\cal C}_{\alpha} \cap {}^{\beta}{\cal C}_{\alpha}$. If we assume
that for every $\alpha \in (0,1)$ there exists a {\it unique} point in the intersection ${\cal C}_{\alpha} \cap {}^{\beta}{\cal C}_{\alpha}$,
this would imply that there are no other solutions to \eqref{eqn_phiz_phiw} except for those on the line $l$,
which in turn implies that the diagonal points $(u,u) \in D$ are the only solutions to \eqref{eqn_hprime_1}. Recall that the solutions to
equation \eqref{eqn_hprime_1} give us the critical points of the function $x \mapsto C_{\rho}(x,u^2/x)$. Therefore, if there are no other critical points except for the diagonal ones, then the path of maximal dependence must be diagonal.
\end{proof}

As we demonstrate next, each level set ${\cal C}_{\alpha},\ \alpha\in(0,\ 1)$ is in fact a smooth curve in ${\mathbb R}^2$ and a boundary of a convex set. Some of these curves are shown in Figure \ref{fig1}.
\begin{figure}
\centering
\includegraphics[height =7cm]{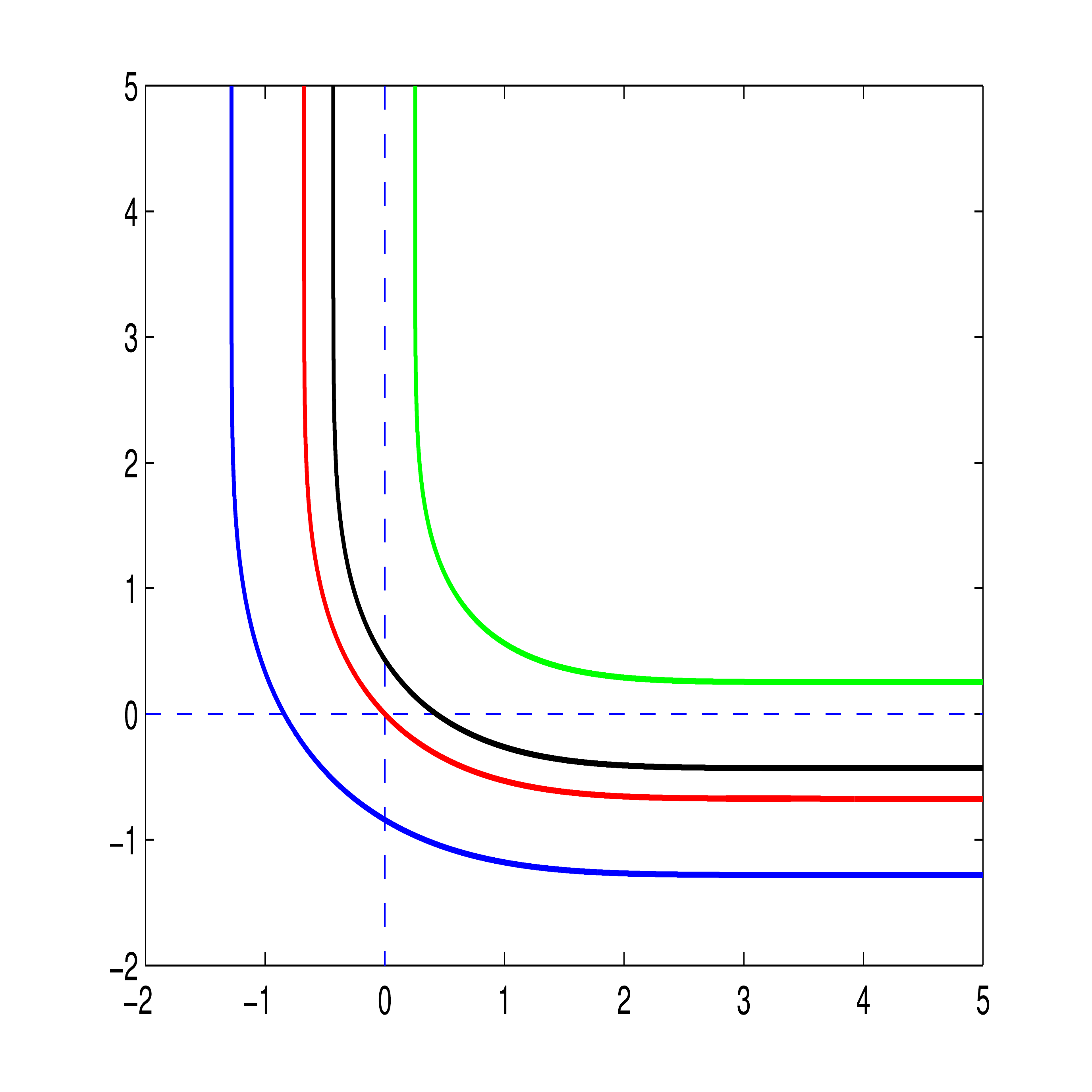}
\caption{The curve ${\cal{C}}_{\alpha}$ for $\alpha = 1/10$  (blue), 1/4 (red), 1/3 (black), and 3/5 (green).}
\label{fig1}
\end{figure}

\begin{lemma}\label{lemma_C_properties} For every $\alpha \in (0,\ 1)$, the following properties hold:
\begin{enumerate}
\item [\textnormal{(P1)}] \label{prop-a}
${\cal{C}}_{\alpha}$ is a smooth curve in ${\mathbb R}^2$;
\item [\textnormal{(P2)}] \label{prop-c}
${\cal{C}}_{\alpha}$ is symmetric with respect to the diagonal line $\{(w,z) : w=z\}\subset {\mathbb R}^2$;
\item [\textnormal{(P3)}] \label{prop-b}
${\cal{C}}_{\alpha}$ is the boundary of a convex set which lies in
$\{w > \Phi^{-1}(\alpha)\} \cap \{z > \Phi^{-1}(\alpha)\}$. The lines
$z=\Phi^{-1}(\alpha)$ and $w=\Phi^{-1}(\alpha)$ are the asymptotes of ${\cal C}_{\alpha}$.
\end{enumerate}
\end{lemma}
\begin{proof}
First of all, we note that the equation $\Phi(w)\Phi(z)=\alpha$ can be solved for $z$ in terms of $w$ as follows
\begin{equation}\label{fn_z_w}
z(w)=\Phi^{-1}(\alpha/\Phi(w)).
\end{equation}
The level set ${\cal C}_{\alpha}$ is simply the graph of $z(w)$, which is clearly a
smooth function defined for $w>\Phi^{-1}(\alpha)$.  This proves property (P1).  The symmetry with respect to interchanging $w \leftrightarrow z$ follows at once from
definition  \eqref{def_C_alpha}.  This proves property (P2).

Since
$$
z(w) \to \Phi^{-1} (\alpha),\ {\text{for}}\ w\to +\infty,
$$
the horizontal line $z=\Phi^{-1}(\alpha)$ is an asymptote, and the vertical asymptote $w=\Phi^{-1}(\alpha)$ follows from the above-mentioned symmetry with respect to $z\leftrightarrow w$.

The function $z(w)$ given by \eqref{fn_z_w} is convex. An easy way to prove this is via the fact that
$\Phi(w)$ is log-concave, which implies that the function of two variables $(w,z) \mapsto \ln(\Phi(z)\Phi(w))$ is concave. Therefore, its upper set
\begin{equation}\label{def_U_alpha}
{\cal U}_{\alpha}:=\{ (w,z) \in {\mathbb R}^2 \; : \; \ln(\Phi(z)\Phi(w))>\ln(\alpha)\}
\end{equation}
must be convex. It is clear that ${\cal C}_{\alpha}$ is the boundary of ${\cal U}_{\alpha}$.  This proves property (P3) and completes the proof of Lemma \ref{lemma_C_properties}.
\end{proof}

We recall that for any set $\gamma  \subset {\mathbb R}^2$ we denote by ${}^{\beta} \gamma$ the rotation
of $\gamma$ by angle $\beta$ counter-clockwise. In particular, if $\gamma$ is a curve which can be written in  polar coordinates $(r, \theta )$ as
\begin{equation}\label{curve-0}
\gamma=\left \{ (r(\theta)\cos(\theta), r(\theta) \sin(\theta)) \; : \;  \theta \in (\theta_1, \theta_2) \right \} ,
\end{equation}
then ${^{\beta}}\gamma$ is also a curve whose expression in polar coordinates is given by
\begin{equation}\label{curve-1}
{^{\beta}}\gamma=\left \{(r(\theta-\beta)\cos(\theta), r(\theta-\beta) \sin(\theta)) \; : \; \theta \in (\theta_1+\beta, \theta_2+\beta) \right \}.
\end{equation}

\begin{figure}
\centering
\includegraphics[height =7cm]{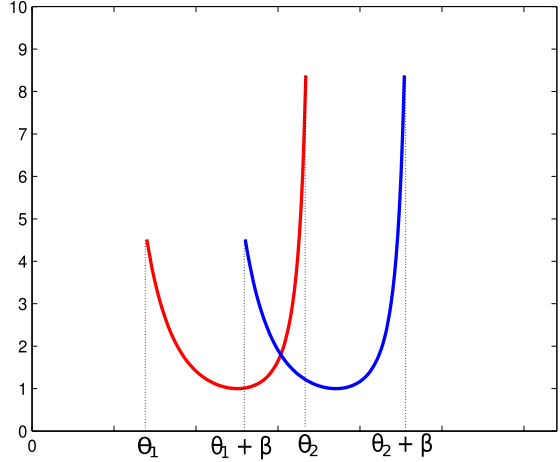}
\caption{Illustration to the proof of Lemma \ref{lemma_inter-point}(i):
 the curves $\gamma$ (red) and ${}^{\beta}\gamma$ (blue) in polar coordinates ($\theta$ is on x-axis, and $r$ is on  y-axis).}
\label{fig2}
\end{figure}

\begin{lemma}\label{lemma_inter-point}
${}$
\begin{itemize}
\item[\textnormal{(i)}]
Assume that a curve $\gamma $ is written in polar coordinates in form (\ref{curve-0}).  If the function $r(\theta)$ is strictly decreasing on $(\theta_1,\tilde \theta)$ and strictly increasing on $(\tilde \theta, \theta_2)$ for some $\tilde \theta \in (\theta_1, \theta_2)$ then for every  $\beta \in (0,2\pi-\theta_2+\theta_1)$ there is at most one point of intersection of $\gamma$ and ${}^{\beta}\gamma$.
\item[\textnormal{(ii)}] Assume that a curve $\gamma$ is given in the parametric form $(w(t),z(t))$, $t\in I$, where $I\subset {\mathbb R}$ is an interval. If
$r(t)=\sqrt{w(t)^2+z(t)^2}$ is nonzero and strictly monotone for $t\in I$, then for any $\beta \in (0,2\pi)$ the curves
$\gamma$ and ${}^{\beta} \gamma$ do not intersect.
\end{itemize}
\end{lemma}
\begin{proof}
While  the proof of part (i) is quite obvious from Figure \ref{fig2}, we present the details of the proof for mathematical rigour. When $\theta_1+\beta>\theta_2$, then it is clear that ${^{\beta}}\gamma \cap \gamma=\emptyset$ because these curves lie in non-intersecting sectors.
We are left with the case when $\theta_1+\beta<\theta_2$. We further restrict ourselves to the case $\theta_1+\beta<\tilde{\theta}$; the argument in the case $\theta_1+\beta \in [\tilde \theta, \theta_2)$ is identical. Denote $r_1(\theta)=r(\theta)$ and $r_2(\theta)=r(\theta-\beta)$.
Since $r_1(\theta)$ is decreasing on $(\theta_1,\tilde \theta)$ we have
$r_2(\theta)>r_1(\theta)$ for $\theta \in (\theta_1+\beta,\tilde \theta)$. Therefore, the curves $\gamma$ and ${^{\beta}}\gamma$
do not intersect when $\theta \in (\theta_1+\beta,\tilde \theta)$.
When $\theta \in (\tilde \theta, \min(\theta_2, \tilde\theta+\beta))$, the function $r_1(\theta)$ is strictly increasing
while $r_2(\theta)$ is strictly decreasing. By considering the values of these two functions at the endpoints of the interval we conclude that there exists a unique number $\theta^*$ for which $r_1(\theta^*)=r_2(\theta^*)$. The point with polar coordinates
$(r_1(\theta^*),\theta^*)$ then gives us the unique point of intersection of $\gamma$ and ${^{\beta}}\gamma$. On the interval $\theta \in [\tilde \theta+\beta,\theta_2)$ we have the inequality $r_1(\theta)>r_2(\theta)$, and so the curves $\gamma$ and ${^{\beta}}\gamma$ do not intersect in this sector. Hence, we have shown that there exists a unique point of intersection of  $\gamma$ and ${^{\beta}}\gamma$
when $\theta_1+\beta<\theta_2$.

To establish part (ii), let us assume that $(\tilde w, \tilde z) \in \gamma \cap {}^{\beta} \gamma$. This condition implies
that a circle $B_R$ with radius $R=\sqrt{\tilde w^2+ \tilde z^2}$ and center at the origin must intersect the curve $\gamma$ at two distinct points. However, this contradicts the condition that
the radius $r(t)$ is strictly monotone along the curve $\gamma$. Thus we have arrived at a contradiction. Therefore, the intersection of $\gamma$ and ${}^{\beta} \gamma$ must be empty.
\end{proof}

\begin{lemma}\label{lemma3}
 Let $z=f(w)$ be a function defined on $0<w<w_0$ (where $w_0$ can be $+\infty$). Assume that $f(w)$ is smooth, decreasing, convex and its graph $\gamma=\{(w,f(w)) \; : \; 0<w<w_0\}$ is symmetric with respect to the line $z=w$. Then the curve $\gamma$ can be written in polar coordinates in form \eqref{curve-0}, and $r(\theta)$ is strictly decreasing on the interval $(0,\pi/4)$ and strictly increasing
on the interval $(\pi/4,\pi/2)$.
\end{lemma}
\begin{proof}
Note that the condition that $\gamma$ is symmetric with respect to the line $z=w$ implies $f(w_0)=0$. This result and the fact that
$f(w)/w$ is strictly decreasing allows us to represent $\gamma$ in  polar coordinates in form \eqref{curve-0} with $\theta_1=0$ and
$\theta_2=\pi/2$. Let $(w_0,z_0)$ be the point of intersection of $\gamma$ and the line $z=w$, so that $z_0=f(w_0)$. The part of the graph
with $w>w_0$ corresponds to the polar coordinate representation with $\theta \in (0,\pi/4)$.
The symmetry of $\gamma$ with respect to $z=w$ implies $f'(w_0)=-1$, and since the function $f(w)$ is convex we see that
$f'(w)>-1$ for $w>w_0$. The radius in polar coordinates is given by $r=\sqrt{w^2+f(w)^2}$. Thus for $w>w_0$,
$$
\frac{\d r}{\d w}=r^{-1} \left(w+f(w)f'(w)\right)>r^{-1} (w-f(w))>r^{-1} (w-f(w_0))=r^{-1}(w-w_0)>0,
$$
where we have used the fact that $f(w)$ is strictly decreasing. This result combined with the fact that
 $\d w/\d \theta<0$ for $\theta \in (0,\pi/2)$
shows that $\d r/\d \theta<0$ for $\theta \in (0,\pi/4)$. The fact that $\d r/\d \theta>0$ for $\theta \in (\pi/4,\pi/2)$ follows
by symmetry.
\end{proof}

We are now ready to complete the proof of part (\ref{th-c}) of Theorem \ref{th}. According to
Lemma \ref{lemma1}, it is enough to establish the validity of statement \eqref{condition}. We do this in four steps, depending on the value of $\alpha\in(0,\ 1)$.

\vspace{0.35cm}
\noindent
{\bf Proof of statement \eqref{condition} for $\alpha \in (0,1/4)$:}

\begin{figure}
\centering
\includegraphics[height =7cm]{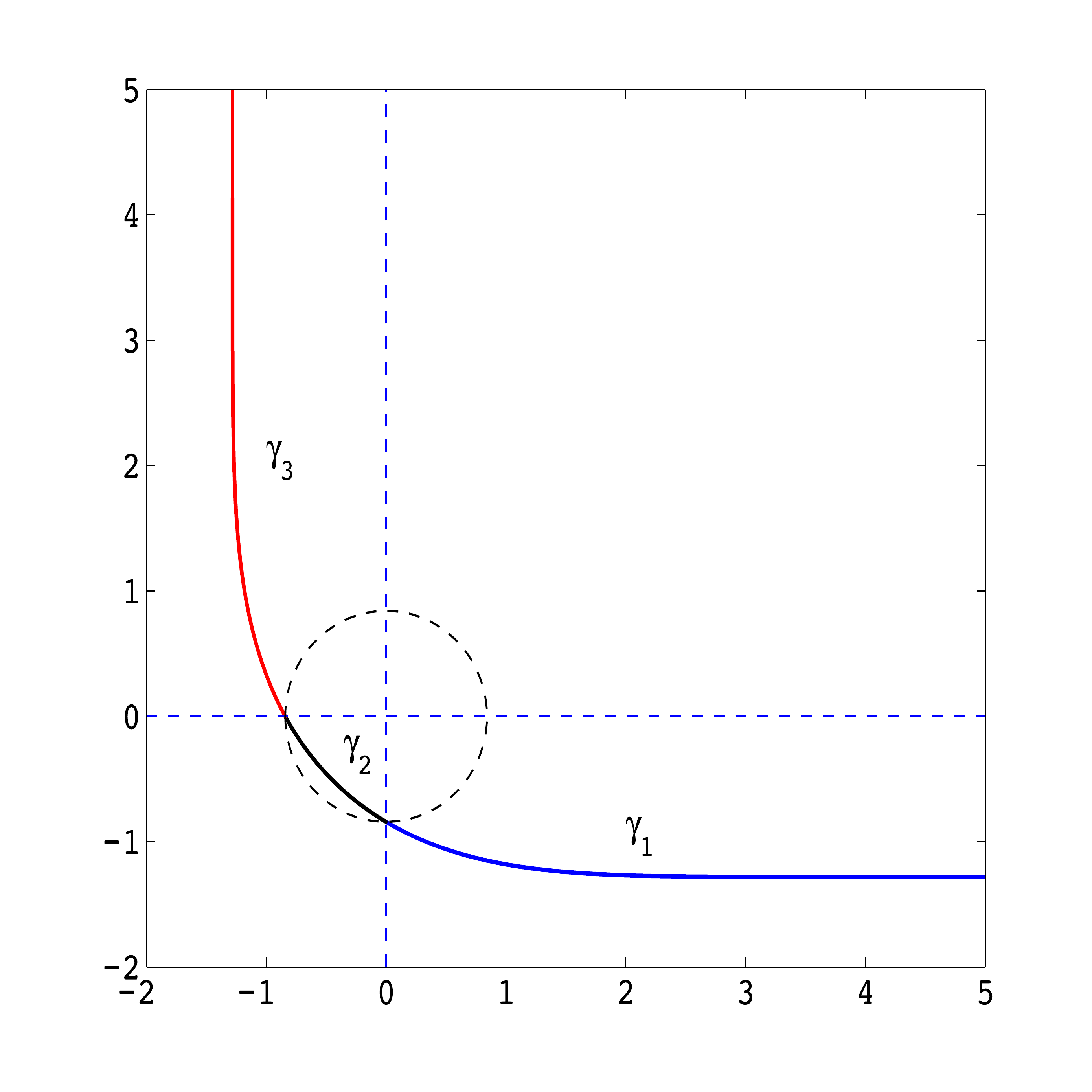}
\caption{Illustration to the proof of statement \eqref{condition} for $\alpha \in (0,1/4)$.}
\label{pic1}
\end{figure}

\noindent
We begin by noting that if $\alpha \in (0,1/4)$ then the origin belongs to the convex set ${\cal U}_{\alpha}$ defined above in
\eqref{def_U_alpha}.   Given this fact and the properties of ${\cal C}_{\alpha}$ which were described in Lemma \ref{lemma_C_properties}, it is clear that the curve ${\cal C}_{\alpha}$ must lie in the second, third and fourth quadrants; see Figure \ref{pic1} or the blue curve in Figure \ref{fig1}). We can express this curve in polar coordinates as follows
\[
{\cal{C}}_{\alpha}=\left \{(r(\theta)\cos(\theta),r(\theta) \sin(\theta)) \; : \;   \pi/2 < \theta <2\pi \right \},
\]
and then divide it into three parts $\gamma_1$, $\gamma_2$ and $\gamma_3$, which lie respectively in the fourth, third and second quadrants (see
Figure  \ref{pic1}).

First we consider the curve $\gamma_1$. We claim that the radius (the distance from the origin) strictly increases as we move along this curve to the right. To see this, we parametrize points on $\gamma_1$ by $(w,z(w))$ for $w>0$, where $z(w)$ is given by  \eqref{fn_z_w}. For $w>0$ the function $z(w)$ is negative and strictly decreasing, which shows that $r(w)=\sqrt{w^2+z(w)^2}$ is strictly increasing.

Next, we consider the curve $\gamma_2$ parametrized by polar coordinates $(r,\theta)$, $\theta \in (\pi,3\pi/2)$. Our goal is to prove that
$r(\theta)$ is strictly decreasing on the interval $(\pi,5\pi/4)$ and strictly increasing on the interval $(5\pi/4,3\pi/2)$.
Note that the function $r(\theta)$ satisfies the equation
$$
\Phi(r(\theta) \cos(\theta))\Phi(r(\theta) \sin(\theta))=\alpha, \;\;\; \pi/2 < \theta <2\pi.
$$
Differentiating both sides of this equation with respect to $\theta$ we obtain
\begin{equation}\label{r_prime_over_r}
\frac{1}{r}\frac{\d r}{\d \theta}= \frac{\Phi(w)w e^{-z^2/2}-\Phi(z)ze^{-w^2/2}}{\Phi(w)z e^{-z^2/2}+\Phi(z)we^{-w^2/2}},
\end{equation}
where we denoted $w=r \cos(\theta)$ and $z=r\sin(\theta)$.
The denominator on the right-hand side of \eqref{r_prime_over_r} is strictly negative in the third quadrant. Thus to prove our claim about the increase/decrease of $r(\theta)$ it is enough to demonstrate that the numerator on the right-hand side of \eqref{r_prime_over_r} satisfies
\begin{align}\label{numerator}
&\Phi(w)w e^{-z^2/2}-\Phi(z)ze^{-w^2/2} >0 \quad \textrm{if} \quad w>z ,\\ \nonumber
&\Phi(w)w e^{-z^2/2}-\Phi(z)ze^{-w^2/2} <0 \quad \textrm{if} \quad w<z .
\end{align}
This is indeed true because the function $h(w)=w e^{w^2/2} \Phi(w)$ is strictly increasing
for all $w\in\mathbb{R}$. The monotonicity of $h(w)$ is obvious for $w>0$ and follows
from Pinelis (2002), in which the monotonicity of
$-h(-w)=w e^{w^2/2} {(1-{\Phi}(w))}$ was studied.


Let us summarize what we have established so far about the curve ${\cal C}_{\alpha}=\gamma_1 \cup \gamma_2 \cup \gamma_3$. As we move along this curve, starting in its upper part, the radius strictly decreases until it reaches its global minimum at the point of intersection of
${\cal C}_{\alpha}$ and the line $z=w$,  and afterwards the radius strictly increases.

Now we are ready to prove that there exists a unique point of intersection between ${\cal C}_{\alpha}$ and ${}^{\beta}{\cal C}_{\alpha}$.
Let us denote $R=-\Phi^{-1}(2\alpha)$, so that $(0,-R)$ is the point of intersection of ${\cal C}_{\alpha}$ and the $z$-axis; this follows from \eqref{fn_z_w}. The monotonicity properties of the radius imply that the curves $\gamma_1$ and $\gamma_3$ lie outside of the circle $B_R$ with the center at the origin and the radius $R$, while the curve $\gamma_2$ lies completely inside this circle (of course the boundaries of these curves meet at the circle). Let us see what happens when we rotate the curve
${\cal C}_{\alpha}=\gamma_1 \cup \gamma_2 \cup \gamma_3$ by angle $\beta \in (0, \pi/2)$ counter-clockwise. The intersection of $\gamma_1 \cap {}^{\beta} \gamma_1$ and $\gamma_3 \cap {}^{\beta} \gamma_3$ is empty due to
Lemma \ref{lemma_inter-point}(ii). The curves ${}^{\beta}\gamma_1$ and $\gamma_3$ (and, similarly, $\gamma_1$ and ${}^{\beta} \gamma_3$) do not intersect since $\gamma_1$ and $\gamma_3$ lie in the fourth and  second quadrants and the angle of rotation $\beta$ is strictly less
than $\pi/2$.  The intersection  $\gamma_1 \cap {}^{\beta} \gamma_2$ and ${\gamma_3} \cap {}^{\beta} \gamma_2$ is empty since these curves lie in different regions separated by the circle $B_R$ (one is inside and the other one is outside of this circle). And finally,
the intersection $\gamma_2 \cap {}^{\beta} \gamma_2$ consists of at most one point due to Lemma \ref{lemma_inter-point}(i). In fact, such a point of intersection must exist since we know that the curves ${\cal C}_{\alpha}$ and ${}^{\beta} {\cal C}_{\alpha}$ do intersect; see the last paragraph of the proof of Lemma \ref{lemma1}. Thus we have proved that for any $\alpha \in (0,1/4)$ and any $\beta \in (0,\pi/2)$ there exists a
unique point of intersection ${\cal C}_{\alpha} \cap {}^{\beta} {\cal C}_{\alpha}$.
\qed

\vspace{0.25cm}
\noindent
{\bf Proof of statement \eqref{condition} for $\alpha=1/4$:}

\noindent
The curve ${\cal{C}}_{1/4}$ contains the origin; see the red curve in Figure \ref{fig1}.
 The proof of statement \eqref{condition} is the same as the proof  above  in the case $\alpha \in (0,1/4)$, except that now the curve $\gamma_2$ degenerates to a single point $(0,0)$  so that  ${\cal C}_{1/4} \cap {}^{\beta} {\cal C}_{1/4}=\{(0,0)\}$.
\qed

\vspace{0.25cm}
\noindent
{\bf Proof of statement \eqref{condition} for $\alpha \in (1/4,1/2)$:}

\begin{figure}
\centering
\includegraphics[height =7cm]{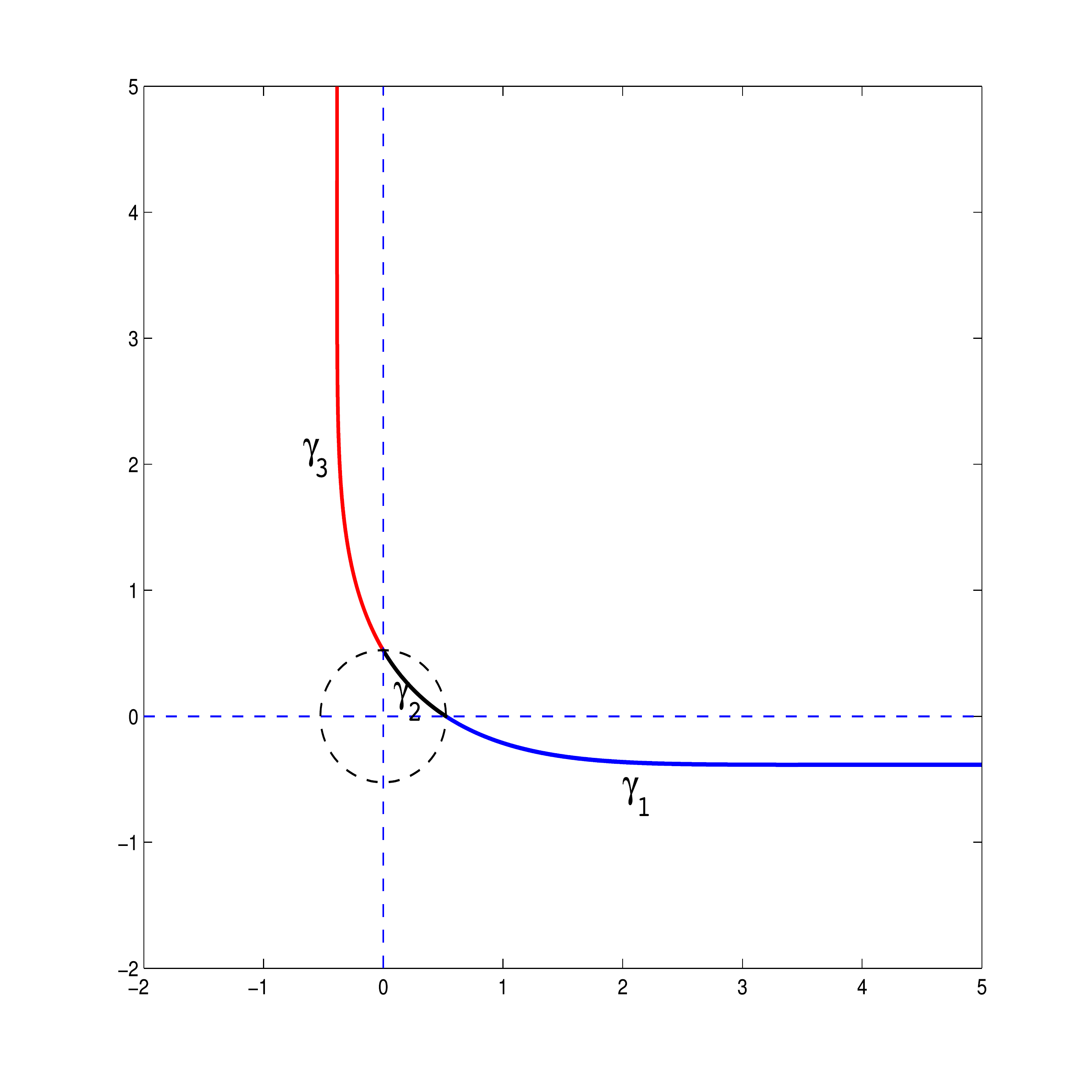}
\caption{Illustration to the proof of statement \eqref{condition} for $\alpha \in (1/4,1/2)$.}
\label{pic2}
\end{figure}

\noindent
When $\alpha \in (1/4,1/2)$ the curve ${\cal C}_{\alpha}$ lies in the first, second and fourth quadrants; see
Figure \ref{pic2} or the black curve
in Figure \ref{fig1}. The proof
of statement \eqref{condition} is the same as in the case $\alpha \in (0,1/4)$, except that we now use Lemma \ref{lemma3} to prove that
the radius $r(\theta)$ is strictly decreasing for $\theta \in (0,\pi/4)$ and strictly increasing for
$\theta \in (\pi/4,\pi/2)$.
\qed

\vspace{0.25cm}
\noindent
{\bf Proof of statement \eqref{condition} for $\alpha \in [1/2,1)$:}

\noindent
When $\alpha \in [1/2,1)$ the curve ${\cal C}_{\alpha}$ lies entirely in the first quadrant;
see the green curve in Figure \ref{fig1}.
In this case the proof of statement \eqref{condition} follows from Lemma \ref{lemma_inter-point}(i) and Lemma \ref{lemma3}.
\qed

This ends the proof of statement (\ref{condition}), and according to Lemma \ref{lemma1}, the proof of Theorem \ref{th} is now complete.
\end{proof}

\section{Concluding comments}
\label{sec-4}

Traditional methods in insurance and finance work well for symmetric and light-tailed risks. It is however
a well-known empirical fact that in reality risks are skewed (Graham and Harvey, 2001), and it is exactly the severe
tail risks that drive economic capital allocations in portfolios of risks. It is not surprising
therefore that the notions of asymmetry and `fat tails'  have been
gaining unprecedented popularity among theoreticians and practitioners (e.g., Staudt, 2010).

The phenomenon of dependent tail risks is equally subtle. In this respect, the not-too-distant
financial crisis doubtlessly demonstrated that, e.g., dependent defaults may be disastrous for economies of entire countries. However, the quantification of tail dependence is not a simple problem. In fact, the
classical approaches that are nowadays commonly employed seem to often underestimate the amount of tail dependence, as they rely solely
 on the main
 diagonal of the copula whereas the copula's (tail) behavior can be very different otherwise.
 For this reason, the aforementioned approaches can miss the so-called
  maximal tail dependence
even in some symmetric dependence structures.

Many, if not the majority, of the models in financial theory have been built with the Gaussian
distribution in mind, for which in this paper we have established that all of the classical
indices of tail dependence (Joe, 1993; Ledford and Tawn, 1996; Coles et al., 1999; Fischer and Klein, 2007) are maximal and
thus conform to the prudence-oriented character of current regulations (e.g., OSFI, 2015).
As the Gaussian copula has been very popular, and it will likely
remain such in the foreseeable future (e.g., MacKenzie and Spears, 2014), our findings are reassuring news for practitioners.

\section*{Acknowledgments}
{
We thank the anonymous referees for valuable comments
and suggestions that improved the work  and resulted in a better
presentation of the material.
We are grateful to Prof. Dr. Paul Embrechts and all participants
of the ETH Series of Talks in Financial and Insurance Mathematics for feedback and
insights.

Our research has been supported by the Natural Sciences and Engineering Research Council (NSERC) of Canada. Jianxi Su also acknowledges the financial support of the Government of Ontario and MITACS Canada via, respectively, the Ontario Graduate Scholarship program
and the Elevate Postdoctoral fellowship.}

\section*{References}
\def\hang{\hangindent=\parindent\noindent}
\hang
{Asimit, V., Gerrard, R., Yanxi, H., Peng, L., 2016.  Tail dependence measure for examining financial extreme co-movements.  Journal of Econometrics, forthcoming.}

\hang
Cherubini, U., Durante, F., Mulinacci, S. (Eds.), 2013. Marshall--Olkin Distributions - Advances in Theory and Applications. Springer, {Switzerland}.

\hang
Coles, S., Heffernan, J., Tawn, J., 1999. Dependence measures for extreme value analyses. Extremes 2 (4), 339--365.

\hang
Donnelly, C., Embrechts, P., 2010. The devil is in the tails: Actuarial mathematics and the subprime mortgage crisis. ASTIN Bulletin 40 (1), 1--33.

\hang
Durante, F., Sempi, C., 2015. Principles of Copula Theory. Chapman and Hall/CRC, London.

\hang
Fischer, M.J., Klein, I., 2007. Some results on weak and strong tail dependence coefficients for means of copulas.
Diskussionspapiere No. 78/2007 // Friedrich-Alexander-Universit\"{a}t Erlangen-N\"{u}rnberg, Lehrstuhl f\"{u}r
Statistik und \"{O}konometrie, {available at http://econstor.eu/bitstream/10419/29623/1/614058171.pdf, accessed on March 3, 2016}.

\hang
Fung, T. and Seneta, E., 2011. The bivariate normal copula is regularly varying. Statistics and Probability Letters 81 (11), 1670--1676.

\hang
Furman, E., Su, J., Zitikis, R., 2015. Paths and indices of maximal tail dependence. ASTIN Bulletin, forthcoming.

\hang
Graham, J.R., Harvey, C.R., 2001. Expectations of equity risk premia, volatility and asymmetry from a corporate finance perspective. Working Paper No. 8678, National Bureau of Economic Research.

\hang
{Guillotte, S., Perron, F., 2016. Polynomial Pickands functions. Bernoulli 22 (1), 213-241.}

\hang
{ Hua, L., Joe, H., 2014.  Strength of tail dependence based on conditional tail expectation.  Journal of Multivariate Analysis 123, 143--159.}

\hang
Joe, H., 1993. Parametric families of multivariate distributions with given margins. Journal of Multivariate Analysis 46 (2), 262--282.

\hang
{Joe, H., Li, H., Nikoloulopoulos, A.K., 2010. Tail dependence functions and vine copulas. Journal of Multivariate Analysis 101, 252--270.}

\hang
Kousky, C., Cooke, R.M., 2009. The unholy trinity: Fat tails, tail dependence, and micro-correlations. Discussion Paper, Resources for the Future, Washington DC.

\hang
Ledford, A.W., Tawn, J.A., 1996. Statistics for near independence in multivariate extreme values. Biometrika 83 (1), 169--187.

\hang
Lehmann, E., 1966. Some concepts of dependence. Annals of Mathematical Statistics 37 (5), 1137--1153.

\hang
Li, D.X., 2000. On default correlation: a copula function approach. Journal of Fixed Income 9 (4), 43--54.

\hang
Li, L., Yuen, K.C., Yang, J., 2014. Distorted mix method for constructing copulas with tail dependence. Insurance: Mathematics and Economics 57, 77--89.

\hang
MacKenzie, D., Spears, T., 2014. `A device for being able to book P\&L': The organizational embedding of the Gaussian copula. Social Studies of Science 44 (3), 418--440.

\hang
McNeil, A.J., Frey, R., Embrechts, P., 2005. Quantitative Risk Management. Princeton University Press, Princeton.

\hang
Meyer, C., 2013. The bivariate normal copula. Communications in Statistics - Theory and Methods 42 (13), 2402--2422.

\hang
Nelsen, R.B., 2006. An Introduction to Copulas, second edition. Springer, New York.

\hang
OSFI, 2015. Own risk and solvency assessment (ORSA). Office of the Superintendent of Financial Institutions, Government of Canada, Ottawa, available at http://www.osfi-bsif.gc.ca/eng/fi-if/rg-ro/gdn-ort/gl-ld/Pages/e19.aspx, accessed on June 17, 2015.

\hang
{Pickands, J., 1981. Multivariate extreme value distributions. Bulletin of the International Statistical Institute 49, 859--878.}

\hang
Pinelis, I., 2002. Monotonicity properties of the relative error of a pad\'{e} approximation for {Mills'} ratio. Journal of Inequalities in Pure and Applied Mathematics 3 (2), 1--8.

\hang
Puccetti, G., R\"{u}schendorf, L., 2014. Asymptotic equivalence of conservative value-at-risk and expected shortfall-based capital charges. Journal of Risk 16 (3), 3--22.

\hang
R\"{u}schendorf, L., 2013. Mathematical Risk Analysis. Springer, Berlin.

\hang
Salmon, F., 2012. The formula that killed Wall Street. Significance 9 (1), 16--20.

\hang
Schmid, F., Schmidt, R., 2007. Multivariate conditional versions of Spearman's rho and related measures of tail dependence. Journal of Multivariate Analysis 98 (6), 1123--1140.

\hang
Sibuya, M., 1959. Bivariate extreme statistics. Annals of the Institute of Statistical {Mathematics} 11 (2), 195--210.

\hang
Staudt, A., 2010. Tail risk, systemic risk and copulas. Casualty Actuarial Society E-Forum 2, 1--23.

\hang
Wang, R., Peng, L., Yang, J., 2013. Bounds for the sum of dependent risks and worst value-at-risk with monotone marginal densities. Finance and Stochastics 17 (2), 395--417.

\hang
Zhang, M-H., 2008. Modelling total tail dependence along diagonals. Insurance: Mathematics and Economics 42 (1), 73--80.

\end{document}